\documentclass[american,aps,pra,reprint, superscriptaddress]{revtex4-1}
\usepackage{tikz}
\usepackage{amsmath,amscd,amsthm,amssymb}
\usepackage{mathrsfs}
\usepackage{graphicx}
\usepackage{epsf,epstopdf}
\usepackage[]{caption3}
\usepackage[all]{xy}

\usepackage[unicode=true,pdfusetitle, bookmarks=true,bookmarksnumbered=false,bookmarksopen=false, breaklinks=false,pdfborder={0 0 0},backref=false,colorlinks=false]{hyperref}
\hypersetup{colorlinks,linkcolor=myurlcolor,citecolor=myurlcolor,urlcolor=myurlcolor}
\usepackage{graphics,epstopdf,graphicx, amsthm, amsmath, amssymb, times, braket, color, bm}
\usepackage[up]{subfigure}
\usepackage{cleveref}

\usepackage{diagbox}

\definecolor{myurlcolor}{rgb}{0,0,0.7}

\theoremstyle{plain}
\newtheorem{thm}{\protect\theoremname}
\newtheorem{prop}{Proposition}

\providecommand{\theoremname}{Theorem}
\newcommand*{\myproofname}{Proof}

\makeatother

\theoremstyle{definition}

\newtheorem{exmp}{Example}

\theoremstyle{remark}
\newtheorem{rem}{Remark}

\begin{document}

 \author{Chunhe Xiong}
 \email{xiongchunhe@csu.edu.cn}

\affiliation{School of Mathematics and Statistics, HNP-LAMA, Central South University, Changsha 410083, China}

 \author{Sunho Kim}
 \email{kimsunho81@hrbeu.edu.cn}
 \affiliation{School of Mathematical Sciences, Harbin Engineering University, Harbin 150001, China}

 \author{Long Long}

\affiliation{School of Mathematics and Statistics, HNP-LAMA, Central South University, Changsha 410083, China}

\author{Junde Wu}
\email{wjd@zju.edu.cn}
\affiliation{School of Mathematical Sciences, Zhejiang University, Hangzhou 310027, China}

\title{Private Capacity of Quantum Channels Induced by Non-stabilizer Environmental States}

\begin{abstract}

We investigate the private capacity of quantum channels using the recently proposed quantum convolution theory for discrete-variable quantum systems. We focus on the role of the magic resource played in this framework. %
Firstly, for a large class of convolutional channels, we find that the private capacity is zero if the fixed environmental state is a stabilizer state. Moreover, we show that the private capacity can be nonzero for some magic environmental states.
Furthermore, we show that the private capacity of a discrete beam splitter unitary is upper-bounded by the amount of magic of the environmental state. In addition, if the environmental state exhibits a certain symmetric structure, even if it is magic, the corresponding private capacity will also vanish for a class of convolution. These results emphasize the role of magic resources in quantum communication.

\end{abstract}


\maketitle

\section{Introduction}

The stabilizer formalism, rooted in the algebraic structure of Pauli operators and Clifford gates, has long served as a cornerstone for quantum error correction and fault-tolerant computation. The common eigenvectors of an Abelian subgroup of the Pauli group are called pure stabilizer states \cite{gottesman1997phd}.
Fundamental classes of quantum error-correcting codes, including Shor codes and toric codes \cite{shor1995code,kitaev2003code}, belong to the category of stabilizer codes, which utilize stabilizer states as carriers.
Especially, the surface code has emerged as a particularly promising candidate for realizing scalable quantum computation \cite{fowler2012,google2023}.

A striking result is that quantum circuits comprise Clifford unitaries acting on stabilizer inputs and measurements, which allows efficient classical simulation, which is also known as the Gottesman-Knill theorem \cite{gottesman1998gkt,Nielsen10}. This theorem underscores the insufficiency of stabilizer resources for achieving quantum computational advantage, yet it simultaneously ignited the pursuit of non-stabilizer resources, also known as " magic, " as essential catalysts for quantum speedups. Such a dichotomy positions non-stabilizer states at the critical frontier between classical simulability and genuine quantum complexity. Recent advances in resource theory have rigorously quantified nonstabilizer states as indispensable for quantum advantage \cite{Veitch2014}, and many measures are proposed to quantify the magic resource \cite{Veitch2012, Bravyi2016,Bravyi2016prx,howard2017,Beverland2020qst,Seddon2021prxq,Leone2022prl,Liu2022prxq,Bu2022cmp,Bu2019prl,Bu2023qst,Bu2024cmp,Chen2024prb,Bu2025tit}.

Quantum information theory establishes fundamental limits on communication through various channel capacities \cite{Bennett2004sci,Holevo2020qe}. The capacities of a quantum channel include the classical \cite{Holevo1973ppi,schumacher1997pra}, private \cite{bennett1984,Cai2004pit,Devetak2005tit}, and quantum capacity \cite{Devetak2005tit,Lloyd1997,Barnum1998pra,Barnum2000tit}, depending on the information to be transmitted. The quantum channel capacity exhibits superadditivity behavior \cite{Divincenzo1998,Horodecki2005,Graeme2007prl,Graeme2008sci,Graeme2009prl,Graeme2009prl2,Hastings2009,Graem2011np,Brandao2012prl,Elkouss2015prl,Leung2014prl,Cubitt2015nc,Lim2018pra,Lim2019pra,Leditzky2018prl,leditzky2023prl,Bausch2021siam,Siddhu2021nc,Siddhu2021tit,Sidhardh2022pra,Filippov2021jpa,Koudia2022,wu2025pra}, which makes it richer and more complex than the classical theory \cite{Shannon1948,Cover1999}. The classical capacity of a quantum channel quantifies how much classical information it can transmit asymptotically with vanishing error \cite{Holevo1998tit,schumacher1997pra}.  The additivity of classical capacity has been rigorously established for various classes of quantum channels \cite{King2002JMP,King2003,Shor2002cmp}, implying that classical capacity completely characterizes their ability to reliably transmit classical information. However, Hastings constructed a random channel to show that Holevo information is superadditive \cite{Hastings2009}, but explicit examples are still lacking \cite{Holevo2020qe,leditzky2023prl}.

The situation for quantum capacity, which characterizes the quantum information that a quantum channel can reliably transmit, is quite different. There are many concrete channels with superadditive coherent information \cite{Divincenzo1998,Graeme2007prl,Leditzky2018prl, Bausch2021siam,Siddhu2021nc,Siddhu2021tit,Sidhardh2022pra,Filippov2021jpa,leditzky2023prl}.  Remarkably, Smith and Yard show theoretically that two quantum channels, each with zero quantum capacity, can have a nonzero capacity when used together, which is called the superactivation phenomenon \cite{Graeme2008sci}. Furthermore, Smith, Smolin, and Yard show that this superactivation phenomenon also occurs in a continuous variable quantum system \cite{Graem2011np}.

Private capacity characterizes the ability of a quantum channel to reliably transmit private information, being intercepted by eavesdroppers. The fact that quantum information transmission will operationally derive private information transmission implies that private capacity is a natural upper bound of quantum capacity. In \cite{Horodecki2008tit}, Horodecki et al. show that there exist low-dimensional bound entangled states with a one-way distillable cryptographic key, which implies that there exists a quantum channel with zero quantum capacity but non-zero private capacity. This result not only establishes a complete separation between the quantum capacity and the private capacity of a quantum channel but also provides a foundational basis for the superactivation phenomenon of quantum capacity \cite{Graeme2008sci}.

We focus on the private capacity of quantum convolutional channels in a discrete variable quantum system. Recently, Bu, Gu, and Jaffe defined a quantum convolution and developed a framework to study discrete variable quantum systems. With this, they discovered a quantum central limit theorem, which means that repeated quantum convolution with a given state converges to a stabilizer state \cite{Bu2023pnas,Bu2025tit}. Moreover, they develop a theoretical method to test stabilizers and to measure magic resource \cite{bu2025cmp}. Furthermore, they bridge quantum communication and magic resource theory in the discrete variable quantum system by investigating the channels derived from the discrete beam splitter unitary with the fixed environmental state \cite{Bu2025prl}. Actually, they show that a magic resource is necessary to obtain the nonzero quantum capacity by proving that the corresponding quantum capacity is zero if the environmental state is stabilizer or symmetric magic, while the quantum capacity can be nonzero for some magic states.

In this paper, we investigate the private capacity of a kind of discrete variable global unitary with fixed environmental states. The main results are as follows: (1) If the fixed environmental state is the mixture of pure stabilizer states, the private capacity of corresponding convolutional channels is zero for these unitary correspond to positive and invertible parameter matrices, including discrete beam splitter and discrete amplifier. (2) We find that the private capacity of a discrete beam splitter channel is upper-bounded by the magic resource of a pure environmental state. This reinforces the result presented in \cite{Bu2025prl}, demonstrating that even private capacity is an upper bound of quantum capacity, it remains governed by the magic resources of the environmental state for the discrete beam splitter, consistent with the latter. (3) The private capacity of the above two unitary evolutions can be nonzero for some magic states. Since the symmetric magic environmental state derives an anti-degradable channel for discrete beam splitter unitary \cite{Bu2025prl}, it means that the magic resource is also a necessary condition to obtain private capacity for this channel. These results are similar to quantum capacity.




 \section{Basic framework}

\subsection{Stabilizer formalism}
 Let us consider n-qudit system with Hilbert space $\mathcal{H}^{\otimes n}$, where $\mathcal{H}\simeq\mathcal{C}^d$. Let $L(\mathcal{H}^{\otimes n})$ denote the set of all linear operators on $\mathcal{H}^{\otimes n}$, and $D(\mathcal{H}^{\otimes n})$ denote the set of all quantum states on $\mathcal{H}^{\otimes n}$, namely positive operators with unit trace. If the quantum state $\rho$ is rank-one, then it is called a pure state, also denoted $\rho=\ket{\psi}\bra{\psi}$. In the Hilbert space $\mathcal{H}$, we define one orthonormal set to be a computational basis and denote it in Dirac notation$
\{\ket{k}\}_{k\in\mathbb{Z}_d},
$
 with $\mathbb{Z}_d$ the cyclic group of order $d$. The Pauli matrices X and Z are unitary transformations that act as
\begin{align}
X\ket{k}=\ket{k+1}, Z\ket{k}=\omega^k_d\ket{k},\forall k\in\mathbb{Z}_d,
\end{align}
where $\omega_d=\exp(2\pi i/d)$ is a $d$-th root of unity. If the local dimension d is an odd prime number, the Weyl operators are defined as
$
w(p,q)=\omega^{-2^{-1}pq}_dZ^pX^q.
$
Here $2^{-1}$ denotes the inverse $\frac{d+1}{2}$ of 2 in $\mathbb{Z}_d$. In the $n$-qudit quantum system, the Weyl operator $w(\vec{p},\vec{q})$ is defined as
$
w(\vec{p},\vec{q})=w(p_1,q_1)\otimes...\otimes w(p_n,q_n),
$
with $\vec{p}=(p_1,...,p_n),\vec{q}=(q_1,...,q_n)\in\mathbb{Z}^n_d$.

Denote
$
V^n=\mathbb{Z}^n_d\times\mathbb{Z}^n_d
$
as the phase space of the $n$ -qudit system, the Weyl operators $\{w(\vec{p},\vec{q})\}$ in the phase space form an orthonormal basis in $L(\mathcal{H}^{\otimes n})$ with respect to the inner product $\langle A,B\rangle=\frac{1}{d^n}\textmd{tr}[A^{\dagger}B]$. For each $\vec{z}=(\vec{p},\vec{q}),\vec{z}^{\prime}=(\vec{p}^{\prime},\vec{q}^{\prime})\in V^n$, define symplectic inner product as
\begin{align}
\langle\vec{z},\vec{z}^{\prime}\rangle_s:=\vec{p}\cdot\vec{q}^{\prime}-\vec{p}^{\prime}\cdot\vec{q}\in\mathbb{Z}_d,
\end{align}
where $\vec{p}\cdot\vec{q}=\sum_jp_jq_j$ for $\vec{p}=(p_1,...,p_n), \vec{q}=(q_1,...,q_n)\in \mathbb{Z}^n_d$. With this symplectic inner product, the multiplication between Weyl operators can be written as
\begin{align*}
&w(\vec{z})w(\vec{z}^{\prime})=\omega_d^{2^{-1}\langle\vec{z},\vec{z}^{\prime}\rangle_s}w(\vec{z}+\vec{z}^{\prime}),\\
&w(\vec{z}^{\prime})w(\vec{z})w^{\dagger}(\vec{z}^{\prime})=\omega_d^{\langle\vec{z}^{\prime},\vec{z}\rangle_s}w(\vec{z}).
\end{align*}

For any n-qudit state $\rho\in\mathcal{D}(\mathcal{H}^{\otimes n})$, the characteristic function $\Xi_{\rho}:V^n\rightarrow \mathbb{C}$ is defined as
\begin{align}\label{defn:char-func}
\Xi_{\rho}(\vec{p},\vec{q})=\textmd{tr}[\rho w(-\vec{p},-\vec{q})].
\end{align}

Hence, the state $\rho$ can be written as a linear combination of Weyl operators with characteristic function
$
\rho=\frac{1}{d^n}\sum_{(\vec{p},\vec{q})\in V^n}\Xi_{\rho}(\vec{p},\vec{q})w(\vec{p},\vec{q}).
$ Since $\rho$ is self-adjoint, then it holds that $\Xi_{\rho}(\vec{p},\vec{q})=\Xi^{\ast}_{\rho}(-\vec{p},-\vec{q})$.


A pure stabilizer state $\ket{\psi}$ for an $n$-qudit system is a common eigenstate of an abelian subgroup with $n$ generators $\{g_i\}$, satisfying $g_i\ket{\psi}=\ket{\psi}$ for each $i$. 
A mixed stabilizer state is a convex combination of pure stabilizer states, and the set of stabilizer states is denoted by {\em STAB}.

A quantum state $\rho$ is a minimal stabilizer projection state associated with an abelian subgroup generated by $\{w(p_i, q_i)\}_{i\in[r]}$, if it has the following form \cite{Bu2023discrete,Bu2023pnas}
\begin{align*}
\rho = \frac{1}{d^{n-r}} \prod_{j=1}^r \mathbb{E}_{k_j \in \mathbb{Z}_d} \left[ e^{\frac{2\pi x_ji}{d}}  w(p_j, q_j) \right]^{k_j},
\end{align*}
for some $(x_1,\ldots,x_r) \in \mathbb{Z}_d^r$, with $\mathbb{E}_{k_i \in \mathbb{Z}_d}(\cdot):=\frac{1}{d} \sum_{k_i \in \mathbb{Z}_d} (\cdot)$. Denote this set of states by $\mathrm{MSPS}$, then each pure stabilizer state is an $\mathrm{MSPS}$, while a mixed state of $\mathrm{MSPS}$ can also be regarded as a stabilizer since it preserves the stabilizer formalism. Therefore, the relative entropy of magic can be defined as
\begin{align*}
\textmd{MRM}(\rho):=\min_{\sigma\in\mathrm{MSPS}}D(\rho||\sigma),
\end{align*}
where $D(\rho||\sigma):=\textmd{tr}[\rho(\log\rho-\log\sigma)]$ is the quantum relative entropy. Moreover, it has also proved that $\textmd{MRM}(\rho)=S(\mathcal{M}(\rho))-S(\rho)$ with $\mathcal{M}(\rho)$ is the mean state with characteristic function $\Xi_{\mathcal{M}(\rho)}(\vec{p},\vec{q})=\Xi_{\rho} (\vec{p},\vec{q})$ for $|\Xi_{\rho} (\vec{p},\vec{q})|=1$, otherwise $\Xi_{\mathcal{M}(\rho)}(\vec{p},\vec{q})=0$ \cite{Bu2023pnas,Bu2023discrete}.


\subsection{Channel capacities}
A quantum channel $\Lambda$ is mathematically described by an isometric map $V:\mathcal{H}_A\rightarrow\mathcal{B}\otimes\mathcal{H}_E$. Then the channel and its complement are
$\Lambda(\rho)=\textmd{tr}_EV\rho V^{\dagger}$ and $\Lambda^c(\rho)=\textmd{tr}_BV\rho V^{\dagger}.$
Denote $S(\rho)=-\textmd{tr}\rho\log\rho$ as the von Neumann entropy
(log is always the binary logarithm). Then the Holevo information, coherent information, and private information are obtained from the optimization of entropy expressions as
\begin{align*}
&\chi_{\{p_i,\rho_i\}}(\Lambda)=S\big(\Lambda(\sum_ip_i\rho_i)\big)-\sum_ip_iS(\Lambda(\rho_i)),\\
&P^{(1)}(\Lambda)=\max_{\{p_i,\rho_i\}}\big(\chi_{\{p_i,\rho_i\}}(\Lambda)-\chi_{\{p_i,\rho_i\}}(\Lambda^c)\big),\\
&Q^{(1)}(\Lambda)=\max_{\rho}\big(S(\Lambda(\rho))-S(\Lambda^{c}(\rho))\big).
\end{align*}
The Devetak \cite{Devetak2005tit} and LSD Theorem \cite{Lloyd1997,Shor2002lecture,Devetak2005tit} gives the quantum and private capacity of $\Lambda$ as the regularization of single-letter form:
\begin{align*}
&P^{(1)}(\Lambda)\le P(\Lambda):=\lim_{n\rightarrow\infty}\frac{1}{n}P^{(1)}(\Lambda^{\otimes n}),\\
&Q^{(1)}(\Lambda)\le Q(\Lambda):=\lim_{n\rightarrow\infty}\frac{1}{n}Q^{(1)}(\Lambda^{\otimes n}).\\
\end{align*}



It has been proved that the quantum capacity is a lower bound of the private capacity, that is, $Q\le P$, and the equality holds for the degradable channels, i.e., there exists another quantum channel $\mathcal{N}$ such that $\Lambda^c=\mathcal{N}\circ\Lambda$. Actually, these two kinds of quantum capacities are equal to zero for anti-degradable channels, i.e., there exists another quantum channel $\mathcal{N}$ such that $\Lambda=\mathcal{N}\circ\Lambda^c$.



\subsection{Quantum convolution theory}

In this paper, we consider the quantum convolutional channel derived from a quantum convolution unitary (introduced by Bu et al \cite{Bu2023pnas}) and fixed environmental states. Let us review the framework of quantum convolution theory as follows.

Given a prime number $d$ and define the $2\times2$ invertible matrix of parameters,
\begin{align}\label{parameter-matrix}
G=\begin{bmatrix}
g_{00}&g_{01}\\g_{10}&g_{11}
\end{bmatrix},
\end{align}
with entries in $\mathbb{Z}_d$, satisfying $\det G=g_{00}g_{11}-g_{01}g_{10}\ne0,\mod d$. We  call $G$ nontrivial if at most one of $g_{ij}$s is $0 \mod d$, and call $G$ positive if each $g_{ij}\ne0 \mod d$. With nontrivial $G$, the convolutional unitary $U$ on $\mathcal{H}^{\otimes n}_A\otimes\mathcal{H}^{\otimes n}_B$ is defined as \cite{Bu2023pnas,Bu2023discrete}
\begin{align}\label{key-unitary}
U=\sum_{\vec{i},\vec{j}}\ket{Ng_{11}\vec{i}-Ng_{10}\vec{j},-Ng_{01}\vec{i}+Ng_{00}\vec{j}}\bra{\vec{i},\vec{j}},
\end{align}
where the vectors $\ket{\vec{i}}=\ket{i_1}\otimes...\otimes\ket{i_n}\in\mathcal{H}^{\otimes n}_A$ and $\ket{\vec{j}}=\ket{j_1}\otimes...\otimes\ket{j_n}\in\mathcal{H}^{\otimes n}_B$.

Then, for $\rho\in \mathcal{H}^{\otimes n}_A$ and $\sigma\in\mathcal{H}^{\otimes n}_B$, the quantum convolutional channels $\Lambda$ based on the unitary $U$  defined in (\ref{key-unitary}) is
\begin{align*}
\Lambda(\rho_{AB})=\textmd{tr}_B[U(\rho_{AB})U^{\dagger}].
\end{align*}
Therefore, the convolution of $\rho$ and $\sigma$ is defined as
\begin{align*}
\rho\boxtimes\sigma:=\Lambda(\rho\otimes\sigma).
\end{align*}
By emphasizing the environmental state $\sigma$, the quantum convolutional channel on $\mathcal{H}^{\otimes n}_A$ can be also defined as
\begin{align*}
\Lambda_{\sigma}(\rho)=\textmd{tr}_B[U(\rho\otimes\sigma)U^{\dagger}].
\end{align*}

Similarly, one can define the quantum convolutional by tracing subsystem A as
\begin{align*}
\rho\tilde{\boxtimes}\sigma:=\textmd{tr}_A(U\rho\otimes\sigma U^{\dagger}).
\end{align*}
Then the corresponding quantum convolutional channel is defined as
\begin{align*}
 &\tilde{\Lambda}_{\sigma}(\rho):=\rho\tilde{\boxtimes}\sigma=\textmd{tr}_A[U(\rho\otimes\sigma)U^{\dagger}].
\end{align*}
Noting that if $\sigma$ is a pure state, $\tilde{\Lambda}_{\sigma}$ is also the complementary channel of $\Lambda_{\sigma}$, that is, $\tilde{\Lambda}_{\sigma}=\Lambda^c_{\sigma}$.

Recall the Convolution-multiplication duality as follows (See Proposition 42 in \cite{Bu2023discrete}).

\begin{prop}\label{CMD}
Given two $n$-qudit states $\rho$ and $\sigma$, the characteristic function satisfies
\begin{align*}
 \Xi_{\rho\boxtimes\sigma}(\vec{p},\vec{q})=\Xi_{\rho}(Ng_{11}\vec{p},g_{00}\vec{q})\Xi_{\sigma}(-Ng_{10}\vec{p},g_{01}\vec{q}),
\end{align*}
$\forall \vec{p},\vec{q}\in\mathbb{Z}_d^n$.
\end{prop}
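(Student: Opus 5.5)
The plan is to work in the Heisenberg picture: move the Weyl operator through $U$, and then use that the input is a product state. By definition (\ref{defn:char-func}),
\begin{align*}
\Xi_{\rho\boxtimes\sigma}(\vec p,\vec q)=\textmd{tr}\big[(\rho\otimes\sigma)\,U^{\dagger}\big(w(-\vec p,-\vec q)\otimes I_B\big)U\big].
\end{align*}
So it suffices to show the operator identity
\begin{align*}
U^{\dagger}\big(w(\vec p,\vec q)\otimes I\big)U=w(Ng_{11}\vec p,g_{00}\vec q)\otimes w(-Ng_{10}\vec p,g_{01}\vec q)
\end{align*}
for all $(\vec p,\vec q)$. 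Replacing $(\vec p,\vec q)$ by $(-\vec p,-\vec q)$ and using $\textmd{tr}[(\rho\otimes\sigma)(A\otimes B)]=\textmd{tr}[\rho A]\,\textmd{tr}[\sigma B]$ then gives the claim directly.

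Here $N=(\det G)^{-1}\in\mathbb{Z}_d$, as in the convolution framework of \cite{Bu2023pnas,Bu2023discrete}. Everything factorizes over the $n$ qudit pairs, since $U$ is a tensor product of single-pair unitaries acting on $(i_k,j_k)$. Hence it suffices to treat $n=1$.

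For $n=1$, I will observe that $U$ is the permutation unitary $\ket{x}\mapsto\ket{Mx}$ with $x=(i,j)^T$ and $M=N\begin{bmatrix}g_{11}&-g_{10}\\-g_{01}&g_{00}\end{bmatrix}=(G^T)^{-1}$. This uses $N\det G=1$. I then conjugate the two generators separately.

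For the shift, $U^{\dagger}(X^q\otimes I)U\ket{i,j}=\ket{G^T(M(i,j)^T+(q,0)^T)}=\ket{i+g_{00}q,\;j+g_{01}q}$. Therefore $U^{\dagger}(X^q\otimes I)U=X^{g_{00}q}\otimes X^{g_{01}q}$.

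For the phase, $(Z^p\otimes I)U\ket{i,j}=\omega_d^{pN(g_{11}i-g_{10}j)}U\ket{i,j}$. Therefore $U^{\dagger}(Z^p\otimes I)U=Z^{Ng_{11}p}\otimes Z^{-Ng_{10}p}$.

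Multiplying the two conjugated generators gives $U^{\dagger}(Z^pX^q\otimes I)U=Z^{Ng_{11}p}X^{g_{00}q}\otimes Z^{-Ng_{10}p}X^{g_{01}q}$.

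The only delicate step is the scalar phase in $w(p,q)=\omega_d^{-2^{-1}pq}Z^pX^q$. The left-hand side carries $\omega_d^{-2^{-1}pq}$. The right-hand side, written in Weyl form, requires the phase $\omega_d^{-2^{-1}(Ng_{11}g_{00}-Ng_{10}g_{01})pq}$. These agree precisely because $N(g_{00}g_{11}-g_{01}g_{10})=1 \bmod d$. This confirms the operator identity, and the proposition follows by the trace factorization above.

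I expect no real obstacle beyond keeping the index and sign conventions consistent: the transpose in $M=(G^T)^{-1}$, and the sign inside $w(-\vec p,-\vec q)$. These should be checked carefully against the form of $U$ in (\ref{key-unitary}).
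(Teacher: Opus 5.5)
Your proof is correct and follows essentially the same Heisenberg-picture route as the paper. The paper only cites Proposition 42 of Bu et al.\ for this statement, and in the appendix it proves the companion identity for $\tilde{\boxtimes}$ in the same way: conjugate the Weyl operator through the Clifford unitary $U$ and factorize the trace over $\rho\otimes\sigma$. The one difference is that you derive the conjugation rule $U^{\dagger}(w(p,q)\otimes I)U=w(Ng_{11}p,g_{00}q)\otimes w(-Ng_{10}p,g_{01}q)$ explicitly on the computational basis, including the phase check via $N\det G=1$, whereas the paper simply recalls it.
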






\section{Main results}


In this part, we first consider the private capacity of quantum convolutional channels.

\begin{thm}\label{stabilizer-derives-zero-capacity}
For an invertible matrix of parameters $G$ with $g_{01}g_{10}\ne0$ and the key unitary defined in (\ref{key-unitary}), the private capacity of the corresponding quantum convolutional channels is zero for the mixture of stabilizer pure states, that is,
\begin{align*}
P(\Lambda_{\sigma})=Q(\Lambda_{\sigma})=0,
\end{align*}
for $\sigma=\sum_ip_i\tau_i$ with $\tau_i$s are all pure stabilizer states.
\end{thm}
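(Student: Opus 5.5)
The plan is to show that for a pure stabilizer environment $\tau$ the channel $\Lambda_\tau$ is anti-degradable. Then we pass to mixtures by convexity/flagging. Since $Q\le P$ always holds, it suffices to prove $P(\Lambda_\sigma)=0$.

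\textbf{Step 1 (pure stabilizer case).} Let $\tau$ be a pure stabilizer state, so $\Lambda^c_\tau=\tilde\Lambda_\tau$. By Proposition~\ref{CMD} and its analogue for $\tilde\boxtimes$ (obtained the same way from the explicit form of $U$), both output characteristic functions are products:
\begin{align*}
\Xi_{\Lambda_\tau(\rho)}(\vec p,\vec q)&=\Xi_\rho(Ng_{11}\vec p,g_{00}\vec q)\,\Xi_\tau(-Ng_{10}\vec p,g_{01}\vec q),\\
\Xi_{\tilde\Lambda_\tau(\rho)}(\vec p,\vec q)&=\Xi_\rho(a\vec p,b\vec q)\,\Xi_\tau(c\vec p,e\vec q),
\end{align*}
with scalars $a,b,c,e$ determined by $G$.

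The characteristic function of a pure stabilizer state is supported on a Lagrangian subspace $L$, and there it has unit modulus and phases given by a quadratic-linear form. The condition $g_{01}g_{10}\neq0$ makes the rescalings of phase space appearing in $\Xi_\tau$ invertible. So both outputs are "filtered" versions of $\rho$: restrict $\Xi_\rho$ (after an invertible diagonal rescaling $D_1$ or $D_2$) to a Lagrangian subspace, and multiply by unit-modulus phases.

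I will construct the degrading map $\mathcal N$ with $\Lambda_\tau=\mathcal N\circ\Lambda^c_\tau$ as a composition of Clifford/symplectic-type rescalings of phase space (realized by Clifford unitaries or by the convolution unitaries themselves), Weyl-phase corrections, and a stabilizer-projection (dephasing) channel. Multiplying a characteristic function by the indicator of a subspace and by stabilizer phases is realized by convolving with, or projecting onto, an MSPS. The key identity to verify is that $D_1D_2^{-1}$ maps the support of the complementary output's filter into the support of the channel output's filter, so that no information is needed beyond what $\Lambda^c_\tau$ retains.

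An alternative route, which I would try first if the direct construction is messy, is to exploit symmetry: the complementary channel $\tilde\Lambda_\tau$ equals a convolution channel with parameter matrix obtained from $G$ by swapping rows (or columns). Since stabilizer states are mapped to stabilizer states under the relevant Clifford rescalings, one can then show $\Lambda_\tau$ and $\Lambda^c_\tau$ are related by pre/post Clifford unitaries together with a change of stabilizer environment. That reduces anti-degradability to the fact that a channel with a stabilizer environment can be simulated from its complement by appending a fresh stabilizer ancilla and applying a Clifford.

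\textbf{Step 2 (mixtures).} For $\sigma=\sum_ip_i\tau_i$, consider the flagged channel $\Lambda'(\rho)=\sum_ip_i\Lambda_{\tau_i}(\rho)\otimes|i\rangle\langle i|$. Bob can discard the flag, so $\Lambda_\sigma$ is a degradation of $\Lambda'$ and $P(\Lambda_\sigma)\le P(\Lambda')$. A flagged mixture of anti-degradable channels is anti-degradable: the environment of $\Lambda'$ can be taken to hold the flag copy and $\Lambda^c_{\tau_i}$, and it applies $\mathcal N_i$ conditioned on the flag. Hence $P(\Lambda')=0$, and therefore $P(\Lambda_\sigma)=Q(\Lambda_\sigma)=0$.

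\textbf{Main obstacle.} The hard part is Step 1: producing the explicit degrading channel and checking the phase-space bookkeeping, in particular that the Lagrangian supports match after rescaling by the entries of $G$ with only $g_{01}g_{10}\neq0$ assumed (entries $g_{00}$ or $g_{11}$ may vanish). I would first treat the case where $\tau$ is a computational-basis or $X$-eigenbasis product state explicitly, and then reduce the general case via Clifford covariance of the convolution.
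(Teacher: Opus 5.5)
Your Step~2 is fine: the flag argument is a valid substitute for the convexity of the set of anti-degradable channels that the paper uses. The gap is in Step~1, which carries the whole theorem and is not actually proved. The degrading map $\mathcal N$ is only promised, and the ``key identity'' is never checked. Your alternative route is circular: ``a channel with a stabilizer environment can be simulated from its complement by appending a stabilizer ancilla and applying a Clifford'' is just a restatement of anti-degradability.

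Worse, the structural picture your bookkeeping relies on is wrong on part of the theorem's range. By Proposition~\ref{char-function-property}(1), $\tau$ enters $\Lambda^c_\tau$ through $\Xi_\tau(Ng_{00}\vec p,g_{11}\vec q)$. That rescaling is governed by $g_{00},g_{11}$, not by $g_{01},g_{10}$, and the hypothesis $g_{01}g_{10}\neq0$ allows $g_{00}=0$ or $g_{11}=0$. For example, take $g_{00}=0$ and $\tau=\ket{+}^{\otimes n}$, so $M=\{(\vec 0,\vec q)\}$. Then $\Xi_\tau(Ng_{00}\vec p,g_{11}\vec q)=1$ everywhere. So $\Lambda^c_\tau(\rho)$ keeps $\Xi_\rho$ on all of phase space (it is a Clifford unitary channel), not a Lagrangian restriction of it, while $\Lambda_\tau$ is a replacement channel. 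When $g_{00}g_{11}\neq0$ your identity does hold. The two diagonal maps through which the outputs sample $\Xi_\rho$ on images of $M$ differ by a scalar, since both have $q$-to-$p$ ratio $-g_{00}g_{10}/(g_{01}g_{11})$. You would still have to treat the degenerate cases separately and actually build $\mathcal N$.

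The idea you are missing is that the complement is never needed. Your own observation, applied to $\Lambda_\tau$ alone, already finishes the proof. The map $B_G(\vec p,\vec q)=(-Ng_{10}\vec p,g_{01}\vec q)$ multiplies $\langle\cdot,\cdot\rangle_s$ by $\mu=-Ng_{10}g_{01}\neq0$, so $K_G=B_G^{-1}(M)$ is again a Lagrangian subspace. For \emph{every} input $\rho$, the characteristic function of $\Lambda_\tau(\rho)$ is supported on $K_G$. Hence all outputs lie in the commutative algebra spanned by $\{w(\vec z):\vec z\in K_G\}$, i.e.\ they are diagonal in one fixed stabilizer basis $\{P_{\vec u}\}$. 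This forces $\Lambda_\tau(\rho)=\sum_{\vec u}\mathrm{tr}(F_{\vec u}\rho)P_{\vec u}$ with the POVM $F_{\vec u}=\Lambda_\tau^{\ast}(P_{\vec u})$. That is a measure-and-prepare, hence entanglement-breaking, channel. Entanglement-breaking channels are anti-degradable, since the environment can keep a copy of the outcome $\vec u$ and re-prepare $P_{\vec u}$. This is the paper's proof, and it uses only the invertibility of $B_G$, which is exactly what $g_{01}g_{10}\neq0$ provides.
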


Actually, this result is similar to the quantum capacity case for quantum discrete beam splitter \cite{Bu2025prl}, where Bu and Jaffe showed that if the environmental state is a pure and stabilizer state, then the corresponding channel is a measure-and-prepare channel.


We prove that this result holds for each quantum convolutional channel for invertible matrix $G$ with $g_{01}g_{10}\ne0$. This property makes the channel entanglement-breaking, hence anti-degradable. From this, zero private capacity follows, and the extension to mixtures of pure stabilizer states can then be explained by convexity.

\begin{proof}

We first prove that for a pure stabilizer environmental state, the corresponding quantum convolutional channel is entanglement-breaking.


 Let $\sigma$ be a pure stabilizer state, then there exists $\vec{v}\in V^n$ such that \cite{Gross2006}
\begin{align*}
\sigma=\frac{1}{d^n}\sum_{\vec{m}\in M}\omega^{\langle\vec{v},\vec{m}\rangle_s}_dw(\vec{m}),
\end{align*}
where $M$ is a maximal isotropic subspace of $V^n$, that is, $\langle\vec{m}_i,\vec{m}_j\rangle_s=0$ for each $\vec{m}_i,\vec{m}_j\in M$ and $|M|=d^n$.

Denote $A_G(\vec{p},\vec{q})=(NG_{11}\vec{p},g_{00}\vec{q})$ and $B_G(\vec{p},\vec{q})=(-Ng_{10}\vec{p},g_{01}\vec{q})$ for $\vec{p},\vec{q}\in\mathbb{Z}^n_d$, then the Proposition \ref{CMD} implies that
\begin{align*}
 \Xi_{\Lambda_{\sigma}}(\vec{z})=\Xi_{\rho}(A_G\vec{z})\Xi_{\sigma}(B_G\vec{z}),
\end{align*}
for each $\vec{z}\in V^n$. Noting that $\Xi_{\sigma}(\vec{m})\ne0$ only for $\vec{m}\in M$, then $\Xi_{\Lambda_{\sigma}}(\vec{z})=0$ for each $B_G(\vec{z})\notin M$. Therefore, the output of the convolutional channel can be written as
\begin{align*}
\Lambda_{\sigma}(\rho)=d^{-n}\sum_{\vec{m}\in M}\Xi_{\rho}(A_G\vec{m})\chi(B_G\vec{m})w(\vec{m}).
\end{align*}

Secondly, since $g_{01}g_{10}\ne0$ and $N\ne0$, the linear transform $B_G$ is invertible. Then, for each $\vec{z}_1=(\vec{p}_1,\vec{q}_1)$ and $\vec{z}_2=(\vec{p}_2,\vec{q}_2)$, one has
\begin{align*}
\langle B_G\vec{z}_1,B_G\vec{z}_2\rangle_s&=-Ng_{10}\vec{p}_1\cdot(g_{01}\vec{q}_2)+g_{01}\vec{q}_1\cdot(Ng_{10}\vec{p}_2)\\
&=\mu\langle\vec{z}_1,\vec{z}_2\rangle_s,
\end{align*}
where $\mu=-Ng_{10}g_{01}\ne0$.
Denoting $K_G:=B^{-1}_G(M)$, then for each $\vec{k}_i\in K_G$, one has $B_G\vec{k}_i\in M$ and
\begin{align*}
\langle\vec{k}_1,\vec{k}_2\rangle_s=-(Ng_{01}g_{10})^{-1}\langle B_G\vec{k}_1,B_G\vec{k}_2\rangle_s=0.
\end{align*}
Moreover, since $B_G$ is invertible, one has
\begin{align*}
  |K_G|=|M|=d^n.
\end{align*}

Furthermore, Define $\mathcal{A}(K_G):=\textmd{span}\{w(\vec{z}):\vec{z}\in K_G\}$, then one has that the elements of $\mathcal{A}(K_G)$ are commute and
\begin{align*}
    \textmd{Ran}(\Lambda_{\sigma})\subset\mathcal{A}(K_G).
\end{align*}
Fixed a linear isomorphism $\kappa:K_G\rightarrow\mathbb{Z}^n_d$, then for each $\vec{u}\in\mathbb{Z}^n_d$, one can define a character in $K_G$ as
\begin{align*}
\alpha_{\vec{u}}(\vec{z}):=\omega^{\vec{u}\cdot\kappa(\vec{z})},
\end{align*}
and
\begin{align*}
    P_{\vec{u}}:=d^{-n}\sum_{\vec{z}\in K_G}\alpha_{\vec{u}}(\vec{z})w(\vec{z}).
\end{align*}
Then, $\{P_{\vec{u}}\}_{\vec{u}\in K_G}$ is a set of rank-one projection satisfying
\begin{align*}
P_{\vec{u}}P_{\vec{v}}&=d^{-2n}\sum_{\vec{z},\vec{y}\in K_G}\alpha_{\vec{u}}(\vec{z})\alpha_{\vec{v}}(\vec{y})w(\vec{z}+\vec{y})P_{\vec{u}}\\
&=d^{-2n}\sum_{\vec{z},\vec{x}\in K_G}\omega^{\vec{u}\cdot\kappa(\vec{z})+\vec{v}\cdot\kappa(\vec{x}-\vec{z})}w(\vec{x})\\
&=d^{-2n}\sum_{\vec{z}\in\mathbb{Z}^n_d}\omega^{(\vec{u}-\vec{v})\cdot\vec{z}}\sum_{\vec{x}\in K_G}\omega^{\vec{v}\cdot\kappa(\vec{x})}w(\vec{x})\\
&=P_{\vec{v}}\delta_{\vec{u},\vec{v}},
\end{align*}
where the third equality follows from the fact that $\kappa$ is an isomorphism and $\sum_{\vec{z}\in\mathbb{Z}^n_d}\omega^{(\vec{u}-\vec{v})\cdot\vec{z}}=\delta_{\vec{u},\vec{v}}$, and
\begin{align*}
\sum_{\vec{u}\in\mathbb{Z}^n_d}P_{\vec{u}}=I.
\end{align*}
As a result, one has
\begin{align*}
\mathcal{A}(K_G)=\bigoplus_{\vec{u}\in\mathbb{Z}^n_d}\mathbb{C}P_{\vec{u}}.
\end{align*}

At last, since $\textmd{Ran}(\Lambda_{\sigma})\subset\mathcal{A}(K_G)$, then for each $\rho$, it holds that
\begin{align*}
 \Lambda_{\sigma}(\rho)=\sum_{\vec{u}\in\mathbb{Z}^n_d}p_{\vec{u}}(\rho)P_{\vec{u}},
\end{align*}
with
\begin{align*}
p_{\vec{u}}(\rho)=\textmd{tr}(P_{\vec{u}}\Lambda_{\sigma}(\rho))=\textmd{tr}(\Lambda^{\ast}_{\sigma}(P_{\vec{u}})\rho).
\end{align*}
Denote $F_{\vec{u}}:=\Lambda^{\ast}_{\sigma}(P_{\vec{u}})$, since $\Lambda^{\ast}_{\sigma}$ is positive and unital \cite{Spehner2014jmp}, one has
\begin{align*}
  F_{\vec{u}}\ge0,
\end{align*}
and
\begin{align*}
\sum_{\vec{u}\in\mathbb{Z}^n_d}F_{\vec{u}}=\Lambda^{\ast}_{\sigma}(I)=I.
\end{align*}
Therefore, $\{F_{\vec{u}}\}$ forms a POVM and
\begin{align*}
\Lambda_{\sigma}(\rho)=\sum_{\vec{u}\in\mathbb{Z}^n_d}\textmd{tr}(F_{\vec{u}}\rho)P_{\vec{u}}
\end{align*}
is a measure and prepare form, which implies that $\Lambda_{\sigma}$ is entanglement-breaking.

As a result, Theorem 11 in \cite{Cubitt2008jmp} tell us that it is anti-degradable which implies that $P(\Lambda_{\sigma})=P^{(1)}(\Lambda_\sigma)=0$.

On the other hand, since the set of anti-degradable channels is convex, for each mixture of pure stabilizer states $\sigma=\sum_ip_i\ket{\tau}_i\bra{\tau}_i$, the corresponding channel satisfies
\begin{align*}
\Lambda_{\sigma}(\rho)=\Lambda(\rho\otimes\sigma)=\sum_ip_i\Lambda(\rho\otimes\tau_i)=\sum_ip_i\Lambda_{\tau_i}(\rho),
\end{align*}
for any $\rho\in\mathcal{H}_A$, which implies that $\Lambda_{\sigma}$ is still anti-degradable.

In conclusion, for each mixture of pure stabilizer state $\sigma$,
\begin{align*}
P(\Lambda_{\sigma})=P^{(1)}(\Lambda_{\sigma})=0.
\end{align*}
\end{proof}

Theorem \ref{stabilizer-derives-zero-capacity} implies that the stabilizer environmental states ensure that the private capacity and quantum capacity are both zero for any invertible parameter matrix $G$ with $g_{01}g_{10}\ne0$, including a discrete beam splitter. This result also generalizes the main Theorem in \cite{Bu2025prl}.





From Theorem \ref{stabilizer-derives-zero-capacity}, we can also infer that a magic environmental state is necessary to achieve a nonzero private capacity. Therefore, let us consider the case when the environmental state is magic.


\begin{exmp} (Discrete amplifier) Let $G=\begin{pmatrix}
l&-m\\-m&l
\end{pmatrix}$ with $l^2-m^2\equiv1\mod d$, then the discrete amplifier unitary is  $U_{l,m}=\sum_{\vec{i},\vec{j}}\ket{l\vec{i}+m\vec{j},m\vec{i}+l\vec{j}}\bra{\vec{i},\vec{j}}$ and the quantum convolutional channel can be written as
\begin{align*}
\Lambda_{l,m,\sigma}(\rho)=\textmd{tr}_B[U_{l,m}(\rho\otimes\sigma)U^{\dagger}_{l,m}].
\end{align*}

If there exists a magic state environment such that the private information of the corresponding convolutional channel is non-zero, this indicates that magic resources are necessary for the convolutional channel to achieve the capacity of transmitting private information. Actually, choosing a magic state to be
 \begin{align*}
 \sigma=\sqrt{\frac{1}{3}}\ket{0}+\sqrt{\frac{2}{3}}\ket{1},
 \end{align*}
 and let the input ensemble $\{\ket{0}, \ket{-lm^{-1}}\}$ with equal probability, then
\begin{align*}
\Lambda_{l,m,\sigma}(\ket{0})&=\frac{1}{3}\ket{0}\bra{0}+\frac{2}{3}\ket{m}\bra{m},\\
\Lambda_{l,m,\sigma}(\ket{-lm^{-1}})&=\frac{1}{3}\ket{-l^2m^{-1}}\bra{-l^2m^{-1}}\\
&+\frac{2}{3}\ket{-m^{-1}}\bra{-m^{-1}}.
\end{align*}
Hence, the Holevo information of $\Lambda_{l,\sigma}$ is
\begin{align*}
&\chi_{\{\frac{1}{2}, \ket{0}, \ket{-lm^{-1}}\}}(\Lambda_{l,m,\sigma})\\
=&S(\Lambda_{l,m,\sigma}(\frac{1}{2}\ket{0}\bra{0}+\frac{1}{2}\ket{-lm^{-1}}\bra{-lm^{-1}}))\\
&-\frac{1}{2}[S(\Lambda_{l,m,\sigma}(\ket{0}))+S(\Lambda_{l,m,\sigma}(\ket{-lm^{-1}}))]\\
=&H(\frac{1}{6},\frac{1}{6},\frac{1}{3},\frac{1}{3})-\frac{1}{2}[H(\frac{1}{3},\frac{2}{3})+H(\frac{1}{3},\frac{2}{3})],
\end{align*}
where $H(\vec{p}):=-\sum_jp_j\log p_j$ is the Shannon entropy of probability vector $\vec{p}=(p_1,...,p_n)$.

Similarly, the output of $\Lambda^c_{l,\sigma}$ are
\begin{align*}
&\Lambda^c_{l,m,\sigma}(\ket{0})=\frac{1}{3}\ket{0}\bra{0}+\frac{2}{3}\ket{l}\bra{l},\\
&\Lambda^c_{l,m,\sigma}(\ket{-lm^{-1}})=\frac{1}{3}\ket{-l}\bra{-l}+\frac{2}{3}\ket{0}\bra{0}.
\end{align*}
And, the Holevo information of $\Lambda^c_{l,\sigma}$ is
\begin{align*}
&\chi_{\{\frac{1}{2}, \ket{0}, \ket{-lm^{-1}}\}}(\Lambda^c_{l,m,\sigma})\\
=&S(\Lambda^c_{l,m,\sigma}(\frac{1}{2}\ket{0}\bra{0}+\frac{1}{2}\ket{-lm^{-1}}\bra{-lm^{-1}}))\\
&-\frac{1}{2}[S(\Lambda^c_{l,m,\sigma}(\ket{0}))+S(\Lambda_{l,m,\sigma}(\ket{-lm^{-1}}))]\\
=&H(\frac{1}{6},\frac{1}{3},\frac{1}{2})-\frac{1}{2}[H(\frac{1}{3},\frac{2}{3})+H(\frac{1}{3},\frac{2}{3})]
\end{align*}
In fact, the outputs of $\ket{0}$ and $\ket{-lm^{-1}}$ under the complementary channel can be merged, thus resulting in a smaller Shannon entropy. Therefore, for $l^2\neq -m^2 \mod d$, the private capacity of $\Lambda_{l,\sigma}$ satisfies
\begin{align*}
P(\Lambda_{l,m,\sigma})&\ge\chi_{\{\frac{1}{2},\ket{0}, \ket{\frac{-l}{m}}\}}(\Lambda_{l,m,\sigma})-\chi_{\{\frac{1}{2},\ket{0}, \ket{\frac{-l}{m}}\}}(\Lambda^c_{l,m,\sigma})\\
&=H(\frac{1}{6},\frac{1}{6},\frac{1}{3},\frac{1}{3})-H(\frac{1}{6},\frac{1}{3},\frac{1}{2})\\
&=\frac{1}{2}\log3-\frac{1}{3}>\frac{1}{6}.
\end{align*}

\end{exmp}

\begin{exmp}(Discrete beam splitter)\label{discrete-beam-splitter}
Let $d$ an odd prime number and $G=\begin{pmatrix}
s&t\\-t&s
\end{pmatrix}$ with $s^2+t^2=1 \mod d$, then the corresponding unitary is $U_{s,t}=\sum_{\vec{i},\vec{j}}\ket{s\vec{i}+t\vec{j},-t\vec{i}+s\vec{j}}\bra{\vec{i},\vec{j}}$, and the corresponding quantum convolutional channel is
\begin{align*}
\Lambda_{s,t,\sigma}(\rho)=\textmd{tr}_B[U_{s,t}(\rho\otimes\sigma)U^{\dagger}_{s,t}].
\end{align*}
Similarly, by choosing a one-qudit magic state as
\begin{align*}
\sigma=\frac{1}{\sqrt{2}}(\ket{0}+\ket{1}),
\end{align*}
since $Q(\Lambda_{s,t,\sigma})\ge\frac{1}{2}$ \cite{Bu2025prl}, one has
\begin{align*}
P(\Lambda_{s,t,\sigma})\ge Q(\Lambda_{s,t,\sigma})\ge\frac{1}{2}.
\end{align*}
\end{exmp}

\begin{exmp}\label{balanced-convolutional-unitary}
Let $d$ an odd prime number and $G=\begin{pmatrix}
g&h\\-g&h
\end{pmatrix}$ with $g,h\ne0 \mod d$, then the corresponding unitary is $U_{g,h}=\sum_{\vec{i},\vec{j}}\ket{Nh\vec{i}+Ng\vec{j},-Nh\vec{i}+Ng\vec{j}}\bra{\vec{i},\vec{j}}$, and the corresponding channel is defined as
\begin{align*}
 \Lambda_{g,h,\sigma}(\rho)=\textmd{tr}_B[U_{g,h}(\rho\otimes\sigma) U^{\dagger}_{g,h}].
\end{align*}where $N=(2gh)^{-1} \mod d$. 
By choosing a one-qudit magic state as
\begin{align*}
\sigma=\sqrt{\frac{1}{3}}\ket{0}+\sqrt{\frac{2}{3}}\ket{1},
\end{align*}
the output states for the input state $\ket{1}$ are
 \begin{align*}
 &\Lambda_{g,h,\sigma}(\ket{1})=\frac{1}{3}\ket{Nh}\bra{Nh}+\frac{2}{3}\ket{Nh+Ng}\bra{Nh+Ng},\\
 &\Lambda^c_{g,h,\sigma}(\ket{1})=\frac{1}{3}\ket{-Nh}\bra{-Nh}+\frac{2}{3}\ket{N(g-h)}\bra{N(g-h)}.
 \end{align*}
And, the output states for $\ket{1-h^{-1}g}$ are
 \begin{align*}
 \Lambda_{g,h,\sigma}(\ket{1-h^{-1}g})=&\frac{1}{3}\ket{Nh-Ng}\bra{Nh-Ng}+\frac{2}{3}\ket{Nh}\bra{Nh},\\
 \Lambda^c_{g,h,\sigma}(\ket{1-h^{-1}g})=&\frac{1}{3}\ket{N(g-h)}\bra{N(g-h)}\\
 +&\frac{2}{3}\ket{N(2g-h)}\bra{2Ng-Nh}.
 \end{align*}
Therefore, for the input $\ket{1}$ with probability  $\frac{1}{3}$ and input $\ket{1-h^{-1}g}\}$ with probability $\frac{2}{3}$, the private capacity satisfies
\begin{align*}
P(\Lambda_{g,h,\sigma}) &\ge\chi_{\{\frac{1}{3},\ket{1};\frac{2}{3},\ket{1-\frac{g}{h}}\}}(\Lambda_{g,h,\sigma})-\chi_{\{\frac{1}{3},\ket{1};\frac{2}{3},\ket{1-\frac{g}{h}}\}}(\Lambda^c_{g,h,\sigma})\\
&= H(\frac{5}{9},\frac{2}{9},\frac{2}{9})-H(\frac{4}{9},\frac{4}{9},\frac{1}{9})\approx0.043>0.
\end{align*}
\end{exmp}

\begin{rem}
For $d=3$ or $d=5$, and $G=\begin{pmatrix}2&1\\1&1
\end{pmatrix}$,  there exists a magic environmental state such that the convolutional channel has non-zero private capacity. However, for discrete amplifier and discrete beam splitter convolution, the minimal dimension of the quantum system is $7$.

Noting that in \cite{Sun2025pra}, the authors show that for $d=3$ or $d=5$, there also exists a magic environmental state such that the convolutional channel has non-zero quantum capacity.
\end{rem}

Furthermore, for a discrete beam splitter, the corresponding quantum capacity has been proved to be bounded by the amount of magic of $\sigma$, i.e., $\textmd{MRM}(\sigma)$ \cite{Bu2025prl}. Even though the private capacity is an upper bound of quantum capacity, we find that the private capacity of $\Lambda_{\sigma}$ is still upper bounded by $\textmd{MRM}(\sigma)$, up to a constant factor.

\begin{thm}\label{magic-bound-capacity}
(Magic bound private capacity)~For nontrivial $s,t\in\mathbb{Z}_d$ with $s^2+t^2\equiv1 \mod d$ and pure state $\sigma$, it holds that
\begin{align*}
P(\Lambda_{s,t,\sigma})\le 2\textmd{MRM}(\sigma),
\end{align*}
where $\textmd{MRM}(\sigma):=S(\mathcal{M}(\sigma))-S(\sigma)$ is a magic measure.
\end{thm}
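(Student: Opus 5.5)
The plan is to compare $\Lambda_{s,t,\sigma}$ with the channel obtained by replacing $\sigma$ by its mean state $\mathcal{M}(\sigma)$, and to bound the loss of private information incurred by this replacement by $\textmd{MRM}(\sigma)$ for each of the two terms in $P^{(1)}$.

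\textbf{Step 1: the mean state is a mixture of pure stabilizer states.} $\mathcal{M}(\sigma)$ is the projection of $\sigma$ onto the abelian stabilizer group $\{\vec z: |\Xi_\sigma(\vec z)|=1\}$, so it is an MSPS. Hence it is a uniform mixture of pure stabilizer states. For the beam splitter $g_{01}g_{10}=-t^2\neq0$, so Theorem \ref{stabilizer-derives-zero-capacity} applies. Thus $\Lambda_{s,t,\mathcal{M}(\sigma)}$ is anti-degradable, and in fact a convex mixture of entanglement-breaking channels. Consequently, for every $n$ and every ensemble, $\chi(\Lambda^{\otimes n}_{\mathcal{M}(\sigma)})-\chi((\Lambda^{\otimes n}_{\mathcal{M}(\sigma)})^c)\le 0$ for a suitable choice of the complement (the Stinespring dilation through a purification of $\mathcal{M}(\sigma)$).

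\textbf{Step 2: a continuity-type bound via a common dilation.} Write $\mathcal{M}(\sigma)=\sum_k \frac{1}{d^{n-r}}\tau_k$ as a uniform mixture of pure stabilizer states, which are Weyl translates $w(\vec x_k)\tau_0 w(\vec x_k)^\dagger$. There is a Weyl twirl $\mathcal{T}$ (averaging over the stabilizer group of $\mathcal{M}(\sigma)$) with $\mathcal{T}(\sigma)=\mathcal{M}(\sigma)$, implemented by a classical register $K$ of entropy $\le S(\mathcal{M}(\sigma))$. The identity $\textmd{MRM}(\sigma)=S(\mathcal{M}(\sigma))-S(\sigma)=S(\mathcal{M}(\sigma))$ for pure $\sigma$ means this register carries exactly $\textmd{MRM}(\sigma)$ bits. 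I would then argue the following. First, Bob's Holevo quantity for $\Lambda_\sigma$ is at most that for $\Lambda_{\mathcal{M}(\sigma)}$ plus $\textmd{MRM}(\sigma)$ per use: Bob holding the register $K$ can simulate the twirled channel, and conversely knowing $K$ adds at most $H(K)$. Second, Eve's Holevo quantity for $\Lambda^c_\sigma$ is at least that of the twirled complement minus $\textmd{MRM}(\sigma)$. The covariance of $U_{s,t}$ under Weyl operators, namely $U(w(\vec z)\otimes w(\vec z'))U^\dagger=w(s\vec z+t\vec z')\otimes w(-t\vec z+s\vec z')$, which follows from Proposition \ref{CMD}, lets the twirl on the environment be pushed onto local unitaries on the outputs $B$ and $E$. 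This makes both comparisons legitimate with the chain rule $I(X;BK)\le I(X;B)+H(K)$.

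\textbf{Step 3: assemble and regularize.} For $n$ uses with environment $\sigma^{\otimes n}$ we have $\textmd{MRM}(\sigma^{\otimes n})=n\,\textmd{MRM}(\sigma)$, since $\mathcal{M}$ factorizes over tensor products through the characteristic function. Combining Steps 1 and 2 gives
\[
\chi(\Lambda_\sigma^{\otimes n})-\chi((\Lambda_\sigma^{\otimes n})^c)\le 0+n\,\textmd{MRM}(\sigma)+n\,\textmd{MRM}(\sigma).
\]
Dividing by $n$ and taking the limit yields $P(\Lambda_{s,t,\sigma})\le 2\,\textmd{MRM}(\sigma)$. The factor $2$ arises because one $\textmd{MRM}$ is paid on Bob's side and one on Eve's side.

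\textbf{Main obstacle.} The hard step is Step 2, specifically the comparison on Eve's side. The complement of $\Lambda_{\mathcal{M}(\sigma)}$ must be defined with a purification of the mixed environment, and Eve may then hold extra purifying systems. I would handle this by letting the purification of $\mathcal{M}(\sigma)$ be exactly the twirl register $K$ (a copy of the classical label), so that Eve's system in the twirled picture is $E K$. Then $I(X;EK)\le I(X;E)+H(K)$ gives the needed inequality in the correct direction, with the twirl-covariance making $E$'s marginal coincide up to local unitaries conditioned on $K$.
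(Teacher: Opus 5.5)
\textbf{There is a genuine gap in Step 2, and it is quantitative: your bookkeeping does not give the constant $2$.}

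\emph{The twirl label costs $2\,\textmd{MRM}(\sigma)$, not $\textmd{MRM}(\sigma)$.} Let $\Gamma=\{\vec z:|\Xi_\sigma(\vec z)|=1\}$. This is an isotropic subspace with $|\Gamma|=d^{r}$, so $\textmd{MRM}(\sigma)=S(\mathcal{M}(\sigma))=(n-r)\log d$.
\begin{itemize}
\item Averaging $\sigma$ over the stabilizer group $\{w(\vec z)\}_{\vec z\in\Gamma}$ does nothing, because $\sigma$ is a joint eigenvector of these operators.
\item A Weyl twirl with $\mathcal{T}(\sigma)=\mathcal{M}(\sigma)$ must instead run over the symplectic complement $\Gamma^{\perp}$, effectively over $\Gamma^{\perp}/\Gamma$, which has $d^{2(n-r)}$ elements.
\item When $\Xi_\sigma$ has no zeros on $\Gamma^{\perp}\setminus\Gamma$ (the generic case), the Fourier condition forces the twirl distribution to be uniform there. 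So your label $K$ has entropy $2\,\textmd{MRM}(\sigma)$.
\item The decomposition of $\mathcal{M}(\sigma)$ into $d^{n-r}$ orthogonal stabilizer states does have entropy $\textmd{MRM}(\sigma)$. But those states are not Weyl translates of $\sigma$, so the covariance $U(I\otimes w(\vec x))U^{\dagger}=w(t\vec x)\otimes w(s\vec x)$ cannot be applied to it.
\end{itemize}

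\emph{The Eve-side step also fails as stated.}
\begin{itemize}
\item A classical copy of the label is not a purification of $\mathcal{M}(\sigma)$. Step 1 is anti-degradability with respect to the true complement, so it gives no inequality for that weaker Eve.
\item The genuine purification $\sum_k\sqrt{p_k}\,w(\vec x_k)\ket{\sigma}\ket{k}_K$ makes $K$ quantum. Its reduced state is the Gram matrix of the non-orthogonal vectors $w(\vec x_k)\ket{\sigma}$.
\item Because covariance also rotates $B$ by $w(t\vec x_k)$, tracing out $B$ leaves coherences between different $k$. What you can actually prove is $I(X;K|E)\le 2S(K)=2\,\textmd{MRM}(\sigma)$.
\end{itemize}
Done correctly, your two payments therefore give $P\le 4\,\textmd{MRM}(\sigma)$.

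\textbf{How to rescue your route: make Eve's side free.}
\begin{itemize}
\item By covariance, $\Lambda_{s,t,\mathcal{M}(\sigma)}$ is the $\Gamma^{\perp}$-twirl (by $w(t\vec x)$) of Bob's output. Likewise $\tilde\Lambda_{s,t,\mathcal{M}(\sigma)}=\textmd{tr}_A[U_{s,t}(\cdot\otimes\mathcal{M}(\sigma))U_{s,t}^{\dagger}]$ is the twirl (by $w(s\vec x)$) of $\Lambda^c_{s,t,\sigma}$.
\item Your chain rule then gives $\chi(\Lambda_{s,t,\sigma})\le\chi(\Lambda_{s,t,\mathcal{M}(\sigma)})+2\,\textmd{MRM}(\sigma)$.
\item Data processing gives $\chi(\tilde\Lambda_{s,t,\mathcal{M}(\sigma)})\le\chi(\Lambda^c_{s,t,\sigma})$.
\item The missing link is $\chi(\Lambda_{s,t,\mathcal{M}(\sigma)})\le\chi(\tilde\Lambda_{s,t,\mathcal{M}(\sigma)})$, in which Eve does not hold the purifying register. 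This follows from Proposition \ref{CMD} and Proposition \ref{char-function-property}(1).
\item Both outputs have characteristic functions supported on $\Gamma$, since $s\Gamma=t\Gamma=\Gamma$. There they equal $\Xi_\rho(s\vec z)\Xi_\sigma(t\vec z)$ and $\Xi_\rho(-t\vec z)\Xi_\sigma(s\vec z)$ respectively.
\item $\Xi_\sigma$ restricted to $\Gamma$ is a character of $\Gamma$. Hence both outputs encode $\Xi_\rho|_{\Gamma}$, i.e.\ the same syndrome distribution of $\rho$, up to a bijective relabeling. A syndrome measure-and-prepare map converts one into the other.
\end{itemize}
With this, the entire $2\,\textmd{MRM}(\sigma)$ is paid once, on Bob's side.

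\textbf{The paper takes a different, shorter path.} It uses $P^{(1)}(\Lambda)\le Q^{(1)}(\Lambda)+Q^{(1)}(\Lambda^c)$ and the identity $\Lambda^c_{s,t,\sigma}=\Lambda_{-t,s,\mathcal{A}(\sigma)}$. It then applies Bu--Jaffe's bound $Q\le\textmd{MRM}$ to each term. There the factor $2$ counts two coherent informations, not a register cost per party.
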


\begin{proof}
Noting that the private information of a quantum channel is upper-bounded by the sum of the coherent information of itself and its complementary which is also a discrete beam splitter channel, therefore, the private information of a discrete beam splitter is bounded by $2\textmd{MRM}(\sigma)$ for a pure environmental state. Since the tensor product of discrete beam splitter channel and the environemnt is still pure, then this upper bound also holds for the private capacity.

On one hand, based on the definition of private information and coherent information, it holds that
\begin{align*}
P^{(1)}(\Lambda_{s,t,\sigma})&=\max_{p_i,\rho_i}\{S(\Lambda_{s,t,\sigma}(\sum_ip_i\rho_i))-\sum_ip_iS(\Lambda_{s,t,\sigma}(\rho_i))]\\
&-[S(\sum_ip_i\Lambda^c_{s,t,\sigma}(\rho_i))-\sum_ip_iS(\Lambda^c_{s,t,\sigma}(\rho_i))]\}\\
&\le \max_{\rho}[S(\Lambda_{s,t,\sigma}(\rho))-S(\Lambda^c_{s,t,\sigma}(\rho))]\\
&+\max_{p_i,\rho_i}\sum_ip_i[S(\Lambda^c_{s,t,\sigma}(\rho_i))-S(\Lambda_{s,t,\sigma}(\rho_i))]\\
&\le Q^{(1)}(\Lambda_{s,t,\sigma})+Q^{(1)}(\Lambda^c_{s,t,\sigma}),
\end{align*}
where the first inequality holds because $\Lambda_{s,t,\sigma}$ and $\Lambda^c_{s,t,\sigma}$ are complementary to each other.

On the other hand, since $\sigma$ is a pure state, then the complementary channel of $\Lambda_{s,t,\sigma}$ can be written as (see Proposition \ref{char-function-property})
\begin{align*}
\Lambda^c_{s,t,\sigma}(\rho)=\textmd{tr}_A[U_{s,t}(\rho_A\otimes\sigma_E) U^{\dagger}_{s,t}]=\Lambda_{-t,s,\mathcal{A}(\sigma)}
\end{align*}
then the Theorem 4 in \cite{Bu2025prl} implies that
\begin{align*}
Q(\Lambda^c_{s,t,\sigma})=Q(\Lambda_{-t,s,\mathcal{A}(\sigma)})\le \textmd{MRM}(\mathcal{A}(\sigma))=\textmd{MRM}(\sigma),
\end{align*}
where $\mathcal{A}(\sigma)=A\sigma A^{\dagger}$ with $A=\sum_{\vec{x}}\ket{-\vec{x}}\bra{\vec{x}}$.

As a result, the private information of $\Lambda_{s,\sigma}$ satisfies
\begin{align*}
 P^{(1)}(\Lambda_{s,t,\sigma})\le Q^{(1)}(\Lambda_{s,t,\sigma})+Q^{(1)}(\Lambda^c_{s,t,\sigma})\le2\textmd{MRM}(\sigma).
\end{align*}

Secondly, let us consider the regularization of private information. Noting that $\Lambda^{\otimes N}_{\sigma}$ is still a discrete beam splitter channel with pure environmental state $\sigma^{\otimes N}$, and the corresponding mean state satisfies
\begin{align*}
  \mathcal{M}(\sigma^{\otimes N})= \mathcal{M}(\sigma)^{\otimes N},
\end{align*}
then
\begin{align*}
\textmd{MRM}(\sigma^{\otimes N})=S(\mathcal{M}(\sigma^{\otimes N}))-S(\sigma^{\otimes N})=N\textmd{MRM}(\sigma).
\end{align*}

In conclusion,
\begin{align*}
&P(\Lambda_{s,\sigma})=\lim_{N\rightarrow\infty}\frac{1}{N}P^{(1)}(\Lambda^{\otimes N}_{\sigma})\\
\le&\lim_{N\rightarrow\infty}\frac{1}{N}2N\cdot\textmd{MRM}(\sigma)=2\textmd{MRM}(\sigma).
\end{align*}
\end{proof}

Theorem \ref{magic-bound-capacity} implies that the magic of $\sigma$ should be large enough to get a higher private capacity of $\Lambda_{s,\sigma}$. It is also interesting to find the optimal state $\sigma$ that can achieve the maximal private capacity.

Let $\sigma_k$  be a quantum state generated by a Clifford circuit $U_c$ on $k$ copies of 1 qudit magic state $\ket{\sigma}=\frac{1}{\sqrt{2}}(\ket{0}+\ket{1})$, that is,
\begin{align*}
\sigma_k=U_c(\ket{\sigma}\bra{\sigma}^{\otimes k}\otimes\ket{0}\bra{0}^{\otimes n-k})U^{\dagger}_c,
\end{align*}
then the magic measure is $\textmd{MRM}(\sigma)=k\log d$. And, based on the Theorem 9 in \cite{Bu2025prl}, it holds that $P(\Lambda_{s,t,\sigma})\ge Q(\Lambda_{s,t,\sigma})\ge ck$, where $c$ is a constant. As a result, one has
\begin{align*}
kc\le P(\Lambda_{s,t,\sigma})\le2k\log d.
\end{align*}
This example indicates that $\sigma_k$ is the optimal state satisfying the above equation, up to some constant factor.

At last, we show that for a class of convolutional unitary (Example \ref{balanced-convolutional-unitary}), the private capacity can be zero for some magic environmental state as follows.



We call a quantum state $\sigma$ symmetric if its characteristic functions are real, i.e.,  $\Xi_{\sigma}(\vec{p},\vec{q})=\Xi_{\sigma}(-\vec{p},-\vec{q})$ for each $\vec{p},\vec{q}\in\mathbb{Z}^n_d$. Noting that each probability combination of a pure stabilizer state is symmetric. Denoting that a quantum channel $\mathcal{A}$ as
\begin{align}\label{symmteric-channel}
\mathcal{A}(\rho)=A\rho A^{\dagger},
\end{align}
where $A=\sum_{\vec{j}\in \mathbb{Z}^n_d}\ket{-\vec{j}}\bra{\vec{j}}$. Hence, the fact $\mathcal{A}(w(\vec{p},\vec{q}))=w(-\vec{p},-\vec{q})$ implies that $\Xi_{\mathcal{A}(\rho)}(\vec{p},\vec{q})=\Xi_{\rho}(-\vec{p},-\vec{q})$ for each $\vec{p},\vec{q}\in\mathbb{Z}^N_d.$ Therefore, a quantum state $\sigma$ is symmetric can be also rewritten as $\mathcal{A}(\rho)=\rho $.

\begin{thm}\label{symmetry-limit-capacity} (Symmetry can limit private capacity). Let $d$ be an odd prime number and $\sigma=\sum_ip_iw(\vec{x}_i,\vec{y}_i)\sigma_iw^{\dagger}(\vec{x}_i,\vec{y}_i)$ be an $n$-qudit state with each pure state $\sigma_i$ is symmetric and $w(\vec{x}_i,\vec{y}_i)$s are arbitrary Weyl operators. Then, $\Lambda_{\sigma}$ is anti-degradable for $G=\begin{pmatrix}
g&h\\-g&h
\end{pmatrix}$, and the private capacity
\begin{align*}
P(\Lambda_{g,h,\sigma})=Q(\Lambda_{g,h,\sigma})=0.
\end{align*}
\end{thm}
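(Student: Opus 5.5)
We first treat a single pure symmetric environment $\sigma_i$, then reduce the general statement to this case.

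\textbf{Pure symmetric case.} For $G=\begin{pmatrix} g&h\\-g&h\end{pmatrix}$ we have $N=(2gh)^{-1}$, and
\begin{align*}
U\ket{\vec{i},\vec{j}}=\ket{Nh\vec{i}+Ng\vec{j},\,-Nh\vec{i}+Ng\vec{j}}.
\end{align*}
The two output registers therefore differ only by the sign of the $\vec{i}$-term. Let $\mathrm{SWAP}$ exchange the two output registers. Then
\begin{align*}
\mathrm{SWAP}\,U\ket{\vec{i},\vec{j}}=\ket{-Nh\vec{i}+Ng\vec{j},\,Nh\vec{i}+Ng\vec{j}}=U(A\otimes I)\ket{\vec{i},\vec{j}},
\end{align*}
and also equals $(A\otimes A)\,U(I\otimes A)\ket{\vec{i},\vec{j}}$, where $A$ is the reflection in (\ref{symmteric-channel}).

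Now take $\sigma$ pure with $A\sigma A^\dagger=\sigma$. Then
\begin{align*}
\tilde{\Lambda}_\sigma(\rho)=\textmd{tr}_A[\mathrm{SWAP}\,U(\rho\otimes\sigma)U^\dagger\,\mathrm{SWAP}]
=\mathcal{A}\big(\textmd{tr}_B[U(\rho\otimes\mathcal{A}(\sigma))U^\dagger]\big)=\mathcal{A}\circ\Lambda_\sigma(\rho).
\end{align*}
This is the step where symmetry of $\sigma$ is used. Because $\sigma$ is pure, $\tilde{\Lambda}_\sigma=\Lambda_\sigma^c$. Hence $\Lambda_\sigma^c=\mathcal{A}\circ\Lambda_\sigma$, and inverting the unitary channel $\mathcal{A}$ (which is its own inverse) gives $\Lambda_\sigma=\mathcal{A}\circ\Lambda_\sigma^c$. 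So $\Lambda_\sigma$ is anti-degradable, and it is also degradable. The same identity can be checked through Proposition \ref{CMD}: the characteristic function of the $A$-output evaluated at $(\vec p,\vec q)$ should match that of the $B$-output evaluated at $(-\vec p,-\vec q)$, using $\Xi_\sigma$ real. That gives a consistency check but is not needed for the argument.

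\textbf{Weyl displacements.} Next let $\sigma'=w(\vec{x},\vec{y})\sigma w^\dagger(\vec{x},\vec{y})$. Since $U$ is a Clifford unitary, conjugating $I\otimes w(\vec{z})$ by $U$ gives a product of Weyl operators $w(\vec{a})\otimes w(\vec{b})$ up to a phase. Therefore $\Lambda_{\sigma'}=\mathcal{W}_1\circ\Lambda_\sigma$ and $\Lambda^c_{\sigma'}=\mathcal{W}_2\circ\Lambda^c_\sigma$, where $\mathcal{W}_1,\mathcal{W}_2$ are unitary Weyl channels on the outputs. Unitary post-processing preserves anti-degradability: from $\Lambda_\sigma=\mathcal{A}\circ\Lambda_\sigma^c$ we get $\Lambda_{\sigma'}=\mathcal{W}_1\circ\mathcal{A}\circ\mathcal{W}_2^{-1}\circ\Lambda^c_{\sigma'}$.

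\textbf{Mixtures.} Each channel $\Lambda_{w\sigma_iw^\dagger}$ is anti-degradable, and $\Lambda_\sigma=\sum_ip_i\Lambda_{w_i\sigma_iw_i^\dagger}$ because $\sigma\mapsto\Lambda_\sigma$ is linear. As in the proof of Theorem \ref{stabilizer-derives-zero-capacity}, convexity of the set of anti-degradable channels then gives that $\Lambda_\sigma$ is anti-degradable, so $P=Q=0$.

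\textbf{Main obstacle.} The weak point is that convexity claim. Anti-degradable channels are exactly those with a symmetric extension of the Choi state, and that set is indeed convex. Here the complement of a mixture is not the mixture of the complements, so the Choi-state characterization is what justifies the claim. I would invoke that characterization explicitly rather than rely on the heuristic argument used in Theorem \ref{stabilizer-derives-zero-capacity}.
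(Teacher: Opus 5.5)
Your proof is correct. Its skeleton matches the paper's: first show that for a pure, Weyl-displaced symmetric environment the complement is a unitary post-processing of the channel, then invoke convexity of anti-degradable channels. What differs is how you get the key identity.

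\textbf{The paper's route.} It works entirely with characteristic functions. From Proposition \ref{CMD} and its $\tilde{\boxtimes}$ analogue it writes $\Xi_{\Lambda_\sigma(\rho)}(\vec{p},\vec{q})=\Xi_\rho(Nh\vec{p},g\vec{q})\Xi_\sigma(Ng\vec{p},h\vec{q})$ and $\Xi_{\tilde{\Lambda}_\sigma(\rho)}(\vec{p},\vec{q})=\Xi_\rho(-Nh\vec{p},-g\vec{q})\Xi_\sigma(Ng\vec{p},h\vec{q})$. It then tracks the phase $\omega_d^{\langle(\vec{x},\vec{y}),(Ng\vec{p},h\vec{q})\rangle_s}$ produced by the displacement. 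A single computation gives $D_{(2h\vec{x},2Ng\vec{y})}\circ\mathcal{A}\circ\Lambda_{D_{(\vec{x},\vec{y})}(\sigma)}=\tilde{\Lambda}_{D_{(\vec{x},\vec{y})}(\sigma)}$. Symmetry of $\sigma$ enters only silently, in the substitution $(\vec{p},\vec{q})\to(-\vec{p},-\vec{q})$.

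\textbf{Your route.} You prove the operator identity $\mathrm{SWAP}\,U=(A\otimes A)U(I\otimes A)$, which gives $\tilde{\Lambda}_\sigma=\mathcal{A}\circ\Lambda_{\mathcal{A}(\sigma)}$ for every pure $\sigma$. This makes it transparent why this particular $G$ is special (the two output registers differ only by the sign of the $\vec{i}$-term) and exactly where symmetry is needed. You then split off the displacement by Clifford covariance.

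\textbf{The two degrading maps agree.} For this $G$ one has $U(I\otimes w(\vec{x},\vec{y}))U^\dagger\propto w(h\vec{x},Ng\vec{y})\otimes w(h\vec{x},Ng\vec{y})$. So $\mathcal{W}_1=\mathcal{W}_2=D_a$ with $a=(h\vec{x},Ng\vec{y})$, and $D_a\circ\mathcal{A}\circ D_{-a}=D_{2a}\circ\mathcal{A}$, which is the paper's map.

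\textbf{Trade-off.} Your argument is more elementary and independent of characteristic-function sign and normalization conventions. The paper's stays inside the convolution--multiplication formalism and handles the displaced case without a separate covariance step.

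\textbf{Convexity.} You rely on the same fact the paper cites from Cubitt et al. Your symmetric-extension justification is the standard proof of it, so this is not a real obstacle.

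\textbf{One slip.} After conjugating by $\mathrm{SWAP}$, the partial trace must be over $B$, i.e. $\tilde{\Lambda}_\sigma(\rho)=\textmd{tr}_B[\mathrm{SWAP}\,U(\rho\otimes\sigma)U^\dagger\,\mathrm{SWAP}]$. As written, with $\textmd{tr}_A$, the expression equals $\Lambda_\sigma(\rho)$. The subsequent equalities are correct once this is fixed.
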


\begin{proof}

We first show that if the environmental state is a symmetric state under a Clifford transformation, then the corresponding quantum convolutional channel is anti-degradable.

Recall the convolution-multiplication duality of quantum convolution (see Propositions \ref{CMD} and \ref{char-function-property}), one has
\begin{align*}
&\Xi_{\Lambda_{\sigma}(\rho)}(\vec{p},\vec{q})=\Xi_{\rho}(Nh\vec{p},g\vec{q})\Xi_{\sigma}(Ng\vec{p},h\vec{q}),\\
&\Xi_{\tilde{\Lambda}_{\sigma}(\rho)}(\vec{p},\vec{q})=\Xi_{\rho}(-Nh\vec{p},-g\vec{q})\Xi_{\sigma}(Ng\vec{p},h\vec{q}),
\end{align*}
where $N=(2gh)^{-1}$. For each Wely operator $w(\vec{x},\vec{y})$ and $n$-qudit state $\sigma$, denote the unitary channel $D_{(\vec{x},\vec{y})}(\sigma)=w(\vec{x},\vec{y})\sigma w^{\dagger}(\vec{x},\vec{y})$, then
\begin{align*}
&\Xi_{\Lambda_{D_{(\vec{x},\vec{y})}(\sigma)}(\rho)}(\vec{p},\vec{q})=\Xi_{\rho}(Nh\vec{p},g\vec{q})\Xi_{D_{(\vec{x},\vec{y})}(\sigma)}(Ng\vec{p},h\vec{q})\\
=&\omega^{\langle(\vec{x},\vec{y}),(Ng\vec{p},h\vec{q})\rangle_s}_d\Xi_{\rho}(Nh\vec{p},g\vec{q})\Xi_{\sigma}(Ng\vec{p},h\vec{q}).\\
\end{align*}


\begin{widetext}
Then, for any input $\rho$,
\begin{align*}
&D_{(2h\vec{x},2Ng\vec{y})}\circ\mathcal{A}\circ\Lambda_{D_{(\vec{x},\vec{y})}(\sigma)}(\rho)\\
=&\sum_{\vec{p},\vec{q}\in\mathbb{Z}^n_d}\omega^{\langle(\vec{x},\vec{y}),(Ng\vec{p},h\vec{q})\rangle_s}_d\Xi_{\rho}(Nh\vec{p},g\vec{q})\Xi_{\sigma}(Ng\vec{p},h\vec{q})D_{(2h\vec{x},2Ng\vec{y})}[w(-\vec{p},-\vec{q})]\\
=&\sum_{\vec{p},\vec{q}\in\mathbb{Z}^n_d}\omega^{\langle(\vec{x},\vec{y}),(-Ng\vec{p},-h\vec{q})\rangle_s}_d\Xi_{\rho}(Nh\vec{p},g\vec{q})\Xi_{\sigma}(Ng\vec{p},h\vec{q})w(-\vec{p},-\vec{q})\\
=&\sum_{\vec{p},\vec{q}\in\mathbb{Z}^n_d}\omega^{\langle(\vec{x},\vec{y}),(Ng\vec{p},h\vec{q})\rangle_s}_d\Xi_{\rho}(-Nh\vec{p},-g\vec{q})\Xi_{\sigma}(Ng\vec{p},h\vec{q})w(\vec{p},\vec{q})\\
=&\tilde{\Lambda}_{D_{(\vec{x},\vec{y})}(\sigma)}(\rho).
\end{align*}
\end{widetext}
Noting that if $\sigma_i$ is a pure state, then $D_{(\vec{x},\vec{y})}(\sigma)$ is also a pure state and $\tilde{\Lambda}_{D_{(\vec{x},\vec{y})}(\sigma)}$ is the complementary channel of $\Lambda_{D_{(\vec{x},\vec{y})}(\sigma)}$. As a result,  $\Lambda_{D_{(\vec{x},\vec{y})}(\sigma)}$ is both degradable and anti-degradable for any $\vec{x},\vec{y}\in\mathbb{Z}^n_d$.

At last, since the set of anti-degradable channels is convex \cite{Cubitt2008jmp}, then
  $\Lambda_{\sigma}=\sum_ip_i\Lambda_{D_{(\vec{x}_i,\vec{y}_i)}(\sigma_i)} $ is also anti-degradable.



\end{proof}

\begin{exmp}\label{symmetric-environment}
Let $d$ be an odd prime number, then the one-qudit
\begin{align*}
\ket{\sigma}=\frac{1}{\sqrt{3}}(\ket{0}+\ket{1}+\ket{-1\mod d})
\end{align*}
is magic. Moreover, it satisfies $\mathcal{A}(\sigma)=\sigma$. Then, Theorem \ref{symmetry-limit-capacity} implies that $\Lambda_{g,\sigma}$ has zero private capacity.
\end{exmp}

\begin{rem}


 As shown in Theorem \ref{stabilizer-derives-zero-capacity} and Example \ref{symmetric-environment}, the corresponding channels for these two types of environments are anti-degradable, so the private capacity coincides with the quantum capacity.

 In fact, separating private capacity from quantum capacity is inherently challenging, and whether the methods of Horodecki and Smith et al. \cite{Horodecki2008tit,Graeme2008sci} can be integrated into the framework of quantum convolution theory remains an interesting question.


\end{rem}

\section{Conclusion and Future Work}

In this paper, we discuss the relationship between private capacity and magic resource in the recently proposed quantum convolution theory of quantum states \cite{Bu2023pnas, Bu2023discrete}. We show that a magic resource can increase the private capacity of quantum convolutional channels by showing that magic environmental states are necessary but not sufficient to obtain nonzero private capacity for a large class of quantum convolution channels.


These phenomena of private capacity are similar to the case of quantum capacity \cite{Bu2025prl}, hence, a natural question is whether it is possible to separate the private capacity from the quantum capacity within the quantum convolution framework? In particular, does there exist a quantum convolutional channel with zero quantum capacity but positive private capacity?\\





\section{ACKNOWLEDGEMENTS}

We are very grateful to the reviewers for their thorough and critical review, which greatly enhanced the clarity and rigor of our work. This work is supported by the Fundamental Research Funds for the National Natural Science Foundation of China (Grants No.12201555), the Fundamental Research Funds for the Central Universities (Grants
No. 3072025YC2404), the National Natural Science Foundation of Hunan Province (Grants No.2025JJ50050) and Hunan Basic Science Research Center for Mathematical Analysis (2024JC2002).




\appendix

\begin{widetext}

\section{The properties of quantum convolution $\tilde{\boxtimes}$}

Recalling that the key unitary is Clifford, that is,
\begin{align*}
Uw(\vec{p}_1,\vec{q}_1)\otimes w(\vec{p}_2,\vec{q}_2)U^{\dagger}=w(g_{00}\vec{p}_1+g_{01}\vec{p}_2,Ng_{11}\vec{q}_1-Ng_{10}\vec{q}_2)
\otimes w(g_{10}\vec{p}_1+g_{11}\vec{p}_2,-Ng_{01}\vec{q}_1+Ng_{00}\vec{q}_2),
\end{align*}
for any $\vec{p}_i,\vec{q}_i\in \mathbb{Z}^n_d,i=1,2$. Then, it holds that
\begin{align*}
U^{\dagger}w(\vec{p}_1,\vec{q}_1)\otimes w(\vec{p}_2,\vec{q}_2)U
=w(Ng_{11}\vec{p}_1-Ng_{01}\vec{p}_2,g_{00}\vec{q}_1+g_{10}\vec{q}_2)
\otimes w(Ng_{00}\vec{p}_2-Ng_{10}\vec{p}_1,g_{11}\vec{q}_2+g_{01}\vec{q}_1),
\end{align*}

\begin{prop}\label{char-function-property}
For quantum convolution $\tilde{\boxtimes}$, there are several properties as follows.

(1) Convolution-multiplication duality of  $\tilde{\boxtimes}$ is
\begin{align*}
\Xi_{\rho\tilde{\boxtimes}\sigma}(\vec{p},\vec{q})=\Xi_{\rho}(-Ng_{01}\vec{p},g_{10}\vec{q})\Xi_{\sigma}(Ng_{00}\vec{p},g_{11}\vec{q}), \forall \vec{p},\vec{q}\in\mathbb{Z}^n_d.
\end{align*}

(2) For discrete beam splitter $U_{s,t}$ and pure state $\sigma$, it holds that

\begin{align*}
\Lambda^c_{s,\sigma}=\Lambda_{-t,\mathcal{A}(\sigma)},
\end{align*}
where $\mathcal{A}(\rho)=A\rho A^{\dagger}$ (See Equation (\ref{symmteric-channel})).
\end{prop}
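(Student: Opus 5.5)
Part (1) follows directly from the Clifford conjugation rule for $U^{\dagger}$ displayed just before the statement. Part (2) reduces to (1) together with Proposition \ref{CMD}, using that a state, and hence a channel, is determined by its characteristic function: the Weyl operators form an orthonormal basis of $L(\mathcal{H}^{\otimes n})$.

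For (1), I would use the definition (\ref{defn:char-func}) and the duality of partial trace to write
\begin{align*}
\Xi_{\rho\tilde{\boxtimes}\sigma}(\vec{p},\vec{q})=\textmd{tr}\big[(\rho\otimes\sigma)\,U^{\dagger}\big(I\otimes w(-\vec{p},-\vec{q})\big)U\big].
\end{align*}
Here the identity sits on factor $A$, which is traced out. I then apply the conjugation formula for $U^{\dagger}(\cdot)U$ with $\vec{p}_1=\vec{q}_1=0$, $\vec{p}_2=-\vec{p}$ and $\vec{q}_2=-\vec{q}$. This gives the product operator
\begin{align*}
w(Ng_{01}\vec{p},-g_{10}\vec{q})\otimes w(-Ng_{00}\vec{p},-g_{11}\vec{q}).
\end{align*}
The trace then factorizes into $\textmd{tr}[\rho\, w(Ng_{01}\vec{p},-g_{10}\vec{q})]\cdot\textmd{tr}[\sigma\, w(-Ng_{00}\vec{p},-g_{11}\vec{q})]$. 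Rewriting each factor as $\Xi(\cdot)$ evaluated at minus the Weyl label yields exactly $\Xi_{\rho}(-Ng_{01}\vec{p},g_{10}\vec{q})\,\Xi_{\sigma}(Ng_{00}\vec{p},g_{11}\vec{q})$.

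For (2), I take $G=\begin{pmatrix}s&t\\-t&s\end{pmatrix}$. Then $\det G=s^2+t^2=1$, and I expect the normalization to give $N=1$ (consistent with the explicit form of $U_{s,t}$ in Example \ref{discrete-beam-splitter}). Since $\sigma$ is pure, $\Lambda^c_{s,t,\sigma}=\tilde{\Lambda}_{s,t,\sigma}$, and part (1) gives
\begin{align*}
\Xi_{\Lambda^c_{s,t,\sigma}(\rho)}(\vec{p},\vec{q})=\Xi_{\rho}(-t\vec{p},-t\vec{q})\,\Xi_{\sigma}(s\vec{p},s\vec{q}).
\end{align*}
On the other side, the parameter matrix for $(-t,s)$ is $G'=\begin{pmatrix}-t&s\\-s&-t\end{pmatrix}$, which is again a valid beam splitter since $t^2+s^2=1$. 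Proposition \ref{CMD} applied to $G'$, together with $\Xi_{\mathcal{A}(\sigma)}(\vec{z})=\Xi_{\sigma}(-\vec{z})$, gives the characteristic function of $\Lambda_{-t,s,\mathcal{A}(\sigma)}(\rho)$ as a product $\Xi_{\rho}(\pm t\vec{p},\pm t\vec{q})\,\Xi_{\sigma}(\pm s\vec{p},\pm s\vec{q})$. I would then compare the two expressions for all $\vec{p},\vec{q}$ and all inputs $\rho$, and conclude equality of the channels.

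The main obstacle is this sign bookkeeping in (2): the signs of $g_{10}$ and $g_{11}$ in $G'$, the value and sign of $N$, and the sign introduced by $\mathcal{A}$ must combine so that both the $\rho$-factor and the $\sigma$-factor match. A naive computation seems to leave an overall sign discrepancy in the $\sigma$-argument. If it persists, my first attempt at a repair is to choose the sign convention for the matrix attached to $(-t,s)$ (for example the transpose form, or $(t,-s)$ entries) so that $\mathcal{A}$ exactly absorbs the sign. Failing that, I would note that $\Lambda_{\cdot,\sigma}$ and $\Lambda_{\cdot,\mathcal{A}(\sigma)}$ differ by pre- and post-composition with the unitary $\mathcal{A}$. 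That unitary equivalence is all that Theorem \ref{magic-bound-capacity} actually needs: quantum capacity is invariant under it, and $\textmd{MRM}(\mathcal{A}(\sigma))=\textmd{MRM}(\sigma)$.
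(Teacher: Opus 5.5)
Your part (1) is the paper's argument essentially verbatim: pull $I\otimes w(-\vec p,-\vec q)$ back through $U^{\dagger}(\cdot)U$ with the displayed Clifford rule, then factor the trace. It is correct. Also $N=(\det G)^{-1}$, as Example \ref{balanced-convolutional-unitary} confirms, so $N=1$ for beam splitters.

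Part (2), however, is not proved. You set up the comparison and then defer the step that decides it. Under the reading you adopt, which is the paper's own reading $\Lambda_{-t,s,\mathcal{A}(\sigma)}$, that step fails, because the sign discrepancy you sensed is genuine. For the matrix $\bigl(\begin{smallmatrix}-t&s\\-s&-t\end{smallmatrix}\bigr)$, Proposition \ref{CMD} gives $\Xi_{\Lambda_{-t,s,\mathcal{A}(\sigma)}(\rho)}(\vec p,\vec q)=\Xi_\rho(-t\vec p,-t\vec q)\,\Xi_\sigma(-s\vec p,-s\vec q)$. Part (1) gives $\Xi_\rho(-t\vec p,-t\vec q)\,\Xi_\sigma(s\vec p,s\vec q)$ for the complement. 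These coincide for all inputs only when $\sigma$ is symmetric. Concretely, take $d=7$, $s=t=2$, $\sigma=\ket{1}\bra{1}$ and $\rho=\ket{0}\bra{0}$. Then $\Lambda^c_{s,t,\sigma}(\rho)=\ket{2}\bra{2}$, but $\Lambda_{-t,s,\mathcal{A}(\sigma)}(\rho)=\ket{5}\bra{5}$. The paper's proof has the same hole. It reaches $\mathcal{A}(\sigma)$ only by expanding in $w(-\vec p,-\vec q)$, which contradicts $\rho=d^{-n}\sum_{\vec z}\Xi_\rho(\vec z)w(\vec z)$ from Section II. It also never checks Proposition \ref{CMD} for the new parameter matrix.

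What is actually true is $\Lambda^c_{s,t,\sigma}=\Lambda_{-t,s,\sigma}=\Lambda_{-t,-s,\mathcal{A}(\sigma)}$. The last form is the statement's $\Lambda_{-t,\mathcal{A}(\sigma)}$ once the partner parameter is taken to be $-s$. That is exactly your ``transpose'' option $\bigl(\begin{smallmatrix}-t&-s\\s&-t\end{smallmatrix}\bigr)$: Proposition \ref{CMD} gives $\Xi_\rho(-t\vec p,-t\vec q)\,\Xi_{\mathcal{A}(\sigma)}(-s\vec p,-s\vec q)=\Xi_\rho(-t\vec p,-t\vec q)\,\Xi_\sigma(s\vec p,s\vec q)$, which matches. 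Your other option, if it means $U_{t,-s}$, fails because its $\rho$-factor is $\Xi_\rho(t\vec p,t\vec q)$. So the missing step is to commit to $(-t,-s)$ and verify it.

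A convention-free verification is available. On basis vectors, $\mathrm{SWAP}\cdot U_{s,t}\ket{\vec i,\vec j}=\ket{-t\vec i+s\vec j,\,s\vec i+t\vec j}=U_{-t,-s}(I\otimes A)\ket{\vec i,\vec j}$. Hence $\textmd{tr}_A[U_{s,t}(\rho\otimes\sigma)U_{s,t}^{\dagger}]=\textmd{tr}_B[U_{-t,-s}(\rho\otimes\mathcal{A}(\sigma))U_{-t,-s}^{\dagger}]$. Purity of $\sigma$ is used only to identify $\tilde{\Lambda}_{\sigma}$ with $\Lambda^c_{\sigma}$.

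Your fallback, $\Lambda_{\mathcal{A}(\sigma)}=\mathcal{A}\circ\Lambda_{\sigma}\circ\mathcal{A}$ (because $U$ commutes with $A\otimes A$), is correct. It does suffice for Theorem \ref{magic-bound-capacity}, but it is weaker than identity (2).
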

\begin{proof}
(1)
\begin{align*}
\Xi_{\rho\tilde{\boxtimes}\sigma}(\vec{p},\vec{q})&=\textmd{tr}[\textmd{tr}_AU(\rho\otimes\sigma)U^{\dagger}w(-\vec{p},-\vec{q})]\\
&=\textmd{tr}[(\rho\otimes\sigma)U^{\dagger}(I_A\otimes w(-\vec{p},-\vec{q}))U]\\
&=\textmd{tr}[(\rho\otimes\sigma)(w(Ng_{01}\vec{p},-g_{10}\vec{q})\otimes w(-Ng_{00}\vec{p},-g_{11}\vec{q}))]\\
&=\Xi_{\rho}(-Ng_{01}\vec{p},g_{10}\vec{q})\Xi_{\sigma}(Ng_{00}\vec{p},g_{11}\vec{q}).
\end{align*}


(2) Noting that for discrete beam splitter, it holds that $\Xi_{\rho\tilde{\boxtimes}\sigma}(\vec{p},\vec{q})=\Xi_{\rho}(-t\vec{p},-t\vec{q})\Xi_{\sigma}(s\vec{p},s\vec{q})$, then for input state $\rho$, one has
\begin{align*}
\Lambda^c_{s,\sigma}(\rho)=\rho\tilde{\boxtimes}\sigma=\sum_{\vec{p},\vec{q}}\Xi_{\rho\tilde{\boxtimes}\sigma}(\vec{p},\vec{q})w(-\vec{p},-\vec{q})=\sum_{\vec{p},\vec{q}}\Xi_{\rho}(-t\vec{p},-t\vec{q})\Xi_{\sigma}(s\vec{p},s\vec{q})w(-\vec{p},-\vec{q})=\Lambda_{-t,s,\mathcal{A}(\sigma)}(\rho),
\end{align*}
where $\Xi_{\mathcal{A}(\rho)}(\vec{p},\vec{q})=\Xi_{\rho}(-\vec{p},-\vec{q})$.
\end{proof}

\end{widetext}

\end{document}